\tikzset{>=latex}
\definecolor{arrowblue}{RGB}{98,145,224}
\tikzset{circarrow/.style={
        *->,
        shorten <=-2pt
    }
}
\tikzstyle{place}=[circle,draw=blue!50,fill=blue!20,thick]
\tikzstyle{placespec}=[circle,draw=blue!50,fill=blue!20,thick, inner sep=0pt,minimum size=6.5mm]
\tikzstyle{place1}=[circle,draw=blue!50,fill=blue!20,thick,inner sep=0pt,minimum size=2mm]
\tikzstyle{place11}=[circle,draw=blue!50,fill=blue!20,thick, inner sep=0pt,minimum size=0mm]
\tikzstyle{placetab}=[circle,draw=blue!50,fill=blue!20,thick, inner sep=0pt,minimum size=2mm]
\tikzstyle{transition}=[rectangle,draw=blue!50,fill=blue!20,thick]
\tikzstyle{vecArrow} = [thick, decoration={markings,mark=at position
\tikzset{
BasicNode/.style={circle, draw= black!50, fill=colora!40, thin, minimum size
  = 5mm, inner sep = 0mm},
SusCirc/.style={BasicNode,fill=colorS!40},
InfCirc/.style={BasicNode,fill=colorI!40},
RecCirc/.style={BasicNode,fill=colorR!40},
DefCirc/.style={BasicNode}
}
\tikzset{
LargeDot/.style = {circle, draw = black, fill = black, thick, minimum
  size = 1mm, inner sep = 0mm}
}
\tikzstyle{BigCirc}=[circle,draw=black!50,thick,inner sep=0pt,minimum size=0pt]
\tikzstyle{default} = [draw, minimum size = 3em, text width = 4em, text centered]
\tikzstyle{wide}=[draw, minimum size=3em, text width=7.5em, text
\tikzstyle{shortbox}=[draw, minimum size=2em, text width=4.5em, text centered]
\tikzstyle{bigbox}=[draw, inner sep=20pt,label={[align=right,shift={(-1.5ex,3ex)}]south east:\llap{#1}}]
\tikzstyle{box} = [draw, minimum size=2em, text width=1.5em, text centered]
\tikzstyle{decay} = [draw, ->, decorate, decoration = {snake,
\tikzstyle{ClearCirc}=[circle,draw=black!50,fill=white!20,thick, inner sep=1pt]
\definecolor{colora}{RGB}{0,115,179}
\definecolor{colorb}{RGB}{230,154,0} 
\definecolor{colorc}{RGB}{0,154,128} 
\definecolor{colord}{RGB}{205,10,179}
\definecolor{colore}{RGB}{255,32,0}
\definecolor{colorf}{RGB}{240,228,66}
\definecolor{colorg}{RGB}{90,179,230}
\definecolor{colorh}{RGB}{205,154,179}
\definecolor{colorS}{RGB}{0,154,128}
\definecolor{colorI}{RGB}{255,32,0}
\definecolor{colorR}{RGB}{205,10,179}
\definecolor{colorE}{RGB}{240,228,66} 
\definecolor{peach}{HTML}{E69F00}
\definecolor{blue}{HTML}{56B4E9}
\definecolor{green}{HTML}{009E73}
\definecolor{yellow}{HTML}{F0E442}
\definecolor{blue2}{HTML}{0072B2}
\definecolor{grey}{HTML}{878787}
\definecolor{pink}{HTML}{CC79A7}
\newcommand{\bu}{\ensuremath{\bullet}}
\newcommand{\ka}{\kappa}
\newtheorem{theorem}{Theorem}
\newtheorem{lemma}[theorem]{Lemma}
\title{Necessary and sufficient conditions for exact closures of epidemic equations on configuration model networks}
\author{Istv\'an Z. Kiss$^1$, Eben Kenah$^2$ \& Grzegorz A. Rempa\l{}a$^2$ }
\date{
$^1$ Department of Mathematics, University of Sussex, Falmer, Brighton BN1 9QH, UK\\
$^2$ Division of Biostatistics, College of Public Health and Mathematical Biosciences Institute, The Ohio State University, Columbus, OH, USA\\
	  \vspace{1cm}\today
	  }
\begin{document}
\maketitle
\begin{abstract}
We prove that the exact closure of SIR pairwise epidemic  equations on a configuration model network is  possible if and only if the degree distribution is Poisson, Binomial, or Negative binomial.
The proof relies on  establishing, for these specific degree distributions,  the equivalence of the closed pairwise model and the so-called  dynamical survival analysis (DSA) edge-based  model which was previously shown to be exact.
Indeed, as we show here, the DSA model is equivalent to the well-known edge-based Volz model.
We use this result to provide reductions of the closed pairwise and  Volz models to the same single equation involving only susceptibles, which has a useful statistical interpretation in terms of the times to infection.
We illustrate our findings  with some numerical examples.
%
\end{abstract}

Keywords: Epidemics, Networks, Inference, Pairwise models, Survival analysis.

\tableofcontents

\section{Introduction}


To understand the transmission dynamics of an infectious disease, one of the most important modelling components is the contact process between individuals in the population.
Changes in this contact structure often result in unexpected changes in epidemic dynamics, as observed in the COVID-19 pandemic as well as earlier epidemics~\cite{kresge2021analyzing,feld1991why}.
These observations have inspired a number of contact-based models of disease transmission (e.g., \cite{hosp22,kiss2017mathematics,gross2006epidemic,ball2019stochastic,risau2009contact,jacobsen2018large}), usually under various versions of   Susceptible-Infected-Recovered (SIR) dynamics.
In many of these models, contacts may be represented mathematically as a stochastic network (random graph) of $N$ individuals (nodes) formed  using the configuration model~\citep{molloy1995critical, bollobas2001random}, where  the node degrees are assumed to be independent and identically distributed as in the Newman–Strogatz–Watts (NSW) random graph construction~\cite{newman2001random}.  
Unfortunately, even for moderate values of $N$, the number of equations needed to describe the temporal dynamics of epidemics on such networks is usually too large to be tractable.
To circumvent this problem, some authors have used ``closure'' to create a reduced and closed system of equations by representing terms corresponding to larger structures in terms of smaller structures.
This representation involves an approximation in most cases, but it can be exact in the case of SIR dynamics on configuration model networks~\cite{kiss2017mathematics}.
	
In this paper, we consider the mean-field equations describing an SIR epidemic   on a configuration model network in terms of the evolving proportion of susceptibles, infecteds, and recovereds as $N$ tends to infinity.
In particular, we present necessary and sufficient conditions for the exact pairwise closure of a typical set of network equations describing the dynamics of $k$-tuples of nodes for $k \ge 1$~\cite{rand1999correlation}.
As it turns out, this condition requires that the degree distribution of the underlying configuration model graph be either Poisson, binomial, or negative binomial.
We also show that, once the condition is satisfied, the limiting pairwise model is equivalent to the edge-based model proposed by Volz~\cite{volz2008sir} and extended by Miller~\cite{miller2011note} as well as to the recently proposed network-based dynamical survival analysis (DSA) model~\cite{house2011insights,jacobsen2018large,khudabukhsh2020survival}. 
This latter equivalence is particularly useful from the viewpoint of statistical inference for the parameters of pairwise models, which have been applied to COVID-19 modelling~\cite{di2022dynamic}.  




 As was first shown by Volz~\cite{volz2008sir}, the limiting dynamics of a configuration model network with a specified degree distribution can be modelled with a system of three nonlinear ordinary differential equations.
 Volz's analysis used the probability generating function (PGF) of the degree distribution, as well as edge-centric quantities (such as the number of edges with nodes in certain states) rather than node-centric quantities (such as the number of infected or susceptible nodes).
 Soon afterwards, Decreusefond and colleagues~\cite{decreusefond2012large} used a more formal approach based on convergence of measure-valued processes to prove that Volz’s mean-field model is indeed  the large limit of a configuration model network.
This was followed by several law of large numbers (LLN) scaling limit results derived under varying sets of technical assumptions such as uniformly bounded degrees~\cite{bohman2012sir, barbour2013approximating}.
To date, the least restrictive set of assumptions seems to be that the degree of a randomly chosen node is uniformly integrable and the maximum degree of initially infected nodes is $o(N)$ where $N$ is the number of nodes~\cite{janson2014law}.

More recently, Jacobsen and colleagues~\cite{jacobsen2018large} provided an alternative method to derive the mean-field limit of a stochastic SIR model on a multi-layer network.
We  refer to it below as the dynamical survival analysis (DSA) model, due to its connection with statistical survival models. 
This approach also shows the exactness of Volz's  model in the large network limit and, perhaps more importantly, results in a mean-field model over variables different from those in Volz's approach.
The primary advantage of this alternative formulation is that it allows us to reinterpret the epidemic from the statistical survival analysis viewpoint by approximating  the probability that a typical node who was susceptible at time $t = 0$ is still susceptible at time $t > 0$.
Below, we show that Volz's model is equivalent to the DSA model and therefore admits the same survival equation. 

The rest of the paper is organised as follows.
In the next section, we provide some relevant background briefly describing stochastic dynamics on a configuration model network along with its different limiting approximations based on the pairwise, Volz, and DSA approaches.
In Section~3, we introduce and then characterise a class of \emph{Poisson-type} distributions and then present  our main result on the necessary and sufficient condition for the exact closure of the pairwise network model.
This result is more precise than that obtained in~\cite{jacobsen2018large}, but it is less general as it only covers single layer networks.
In Sections~4 and~5, we provide some additional details on DSA model's connection with statistical inference and offer concluding remarks.
Additional calculations on the DSA and Volz models are presented in the Appendix.  

\section{Network Epidemic Models}
We  describe  the underlying dynamics of the stochastic  SIR epidemic process on a network of size $N$ as follows:
At the start of an epidemic, we pick $m$ initially infectious individuals at random from the population.
An infectious individual remains so for an infectious period that is sampled from an exponential distribution with rate $\gamma$.
During this period, s/he makes contact with his or her immediate neighbours according to a  Poisson process with intensity $\beta$.
If the individual so contacted is still susceptible, s/he will immediately become infectious.
After the infectious period, the infectious individual recovers and is immune to further infections.
All infectious periods and Poisson processes are assumed to be independent of each other.
The epidemic is assumed to evolve on a configuration model network that is constructed as follows.
Each node is given a random number of half edges according to a specified degree distribution $(p_k)$, and all half edges are matched uniformly at random to form proper edges\footnote{
    Technically, this construction allows for self-loops and edges left unmatched,  but it turns out that such cases are rare and may be neglected as $N\to \infty$ (see, for instance,  \citep{Remko} Section 7.6, pp. 239).
}.
Although the exact behaviour of this SIR epidemic process is quite complicated, there exist several approximations that rely on aggregated or averaged quantities.
To describe them, generating functions are useful.

\subsection{Probability generating function }\label{ssec:pgf}
If $p_k$ is the probability that a randomly chosen node has degree $k$, then the probability generating function (PGF) of the degree distribution is:
\begin{equation}
    \psi(u) = \sum_{k = 0}^\infty p_k u^k
\end{equation}
The PGF $\psi$ contains a tremendous amount of information about epidemic dynamics on configuration model networks.
Let $\theta$ be the probability that an initially susceptible node of degree one remains uninfected at time $t$ in an infinite network.
Then, assuming no variation in infectiousness or susceptibility to infection among nodes except for their degree, the probability that a node with degree $k$ remains uninfected equals the probability $\theta^k$ that infection has not crossed any of its edges (see~\cite{volz2008sir}). 
Summing over all possible $k$ shows that $\psi(\theta)$ is the probability that a randomly chosen node remains susceptible in an infinite network.
The degree distribution of the remaining susceptible nodes has the PGF:
\begin{equation}
    u \mapsto 
    \frac{\sum_{k = 0}^\infty (p_k \theta^k) u^k}
        {\sum_{k = 0}^\infty p_k \theta^k}
    = \frac{\psi(\theta u)}{\psi(\theta)}
\end{equation}
which equals $\psi(u)$ when $\theta = 1$.
In general, $\psi$ tells us about the properties of a randomly chosen susceptible node.

The first derivative of $\psi$ tells us about the mean degree of susceptible nodes and about the properties of a node reached by crossing an edge.
At a given value of $\theta$, the mean degree of the remaining susceptible nodes is:
\begin{equation}
    \frac{\text{d}}{\text{d} u} \frac{\psi(\theta u)}{\psi(\theta)} \bigg|_{u = 1}
    = \frac{\theta \psi'(\theta)}{\psi(\theta)}
    \label{eq:meandeg}
\end{equation}
which equals $\psi'(1)$ when $\theta = 1$.
If we cross an edge, the probability that we end up at a node with degree $k$ is proportional to $k$.
If we start at a randomly chosen node and cross an edge, the number of edges we can use to reach a third node has the PGF:
\begin{equation}
    u \mapsto 
    \frac{\sum_{k = 1}^\infty (k p_k) u^{k - 1}}
        {\sum_{k = 1}^\infty k p_k}
    = \frac{\psi'(u)}{\psi'(1)}
    \label{eq:xdegpgf}
\end{equation}
If you are a susceptible node with a neighbour of degree $k$, this neighbour remains susceptible as long as infection has not crossed any of the $k - 1$ edges that lead to a third node.
At a given value of $\theta$, the probability that a neighbour of a susceptible node remains susceptible is $\psi'(\theta) / \psi'(1)$.
If we cross an edge to reach a susceptible neighbour, the number of edges we can cross to reach a third node has the PGF:
\begin{equation}
    u \mapsto 
    \frac{\sum_{k = 1}^\infty \big(k p_k \theta^k\big) u^{k - 1}}{\sum_{k = 1}^\infty k p_k \theta^k}
    = \frac{\psi'(\theta u)}{\psi'(\theta)}
\end{equation}
This degree distribution is often called the \emph{excess degree distribution}, and it plays an important role in the dynamics of epidemics on networks.

The second derivative of $\psi$ can be used to find the mean excess degree of susceptible nodes and the variance of their degree.
Given $\theta$, the mean excess degree of susceptible nodes is:
\begin{equation}
    \frac{\text{d}}{\text{d} u} \frac{\psi'(\theta u)}{\psi'(\theta)} \bigg|_{u = 1}
    = \frac{\theta \psi''(\theta)}{\psi'(\theta)}
    \label{eq:meanxdeg}
\end{equation}

\subsection{Pairwise model}
The  pairwise model  provides an intuitive way of  describing the dynamics of an SIR epidemic on a configuration model graph.
The pairwise model equations, as proposed for instance in~\cite{rand1999correlation},  are:
\begin{align}
    \begin{aligned}
        [\dot{S}]   &= -\beta [SI] \\
        [\dot{I}]   &= \beta [SI]-\gamma[I] \\
        [\dot{R}]   &= \gamma [I] \\
        [\dot{SI}]  &= -\gamma [SI] + \beta \big([SSI] - [ISI]\big) - \beta [SI] \\
        [\dot{SS}]  &= -2 \beta [SSI]
    \end{aligned}
    \label{eq:PW_unclosed}
\end{align}
where $[A]$, $[AB]$, $[ABC]$ with $A, B, C \in \{S, I, R\}$ stand for the number of singles, doubles and triples in the entire network with the given sequence of states when each group is counted in all possible ways.
More formally,
\begin{equation}
    [ABC] = \sum_{i = 1}^N \sum_{j = 1}^{N} \sum_{k=1}^N a_{ij} a_{jk} I_i(A) I_j(B) I_k(C)
    \label{eq:ABC}
\end{equation}
where $(a_{ij})_{i,j=1,2,\dots, N}$ is the adjacency matrix of the network with entries either zero or one and $I_i(A)$, $I_i(B)$, and $I_i(C)$ are binary variables that equal one when the status of $i$-th individual is $A$, $B$, and $C$, respectively, and equal zero otherwise.
The singles $[A]$ and doubles $[AB]$ are similarly defined.
We consider undirected networks with no self-loops, so $a_{ii}=0$ and $a_{ij}=a_{ji}$.

To completely describe the model, additional equations for triples are needed. These will depend on quadruples, which will depend on quintuples, and so on.
To make the model tractable in the face of an ever-increasing number of variables and equations, one often introduces the notion of a ``closure'' in which larger structures (e.g., triples) are represented by smaller ones (e.g. pairs).     
The  model~\eqref{eq:PW_unclosed} can be closed using the methods described in Section~\ref{sec:closure}.

The two models that we describe next do not require closure
and  are known to be exact in the large network limit (i.e., as $N \to \infty$)~\cite{decreusefond2012large, bohman2012sir, barbour2013approximating, janson2014law}. However, they are less straightforward  to interpret.


\subsection{Volz's model}

In addition to the limiting ($N\to \infty$) probability $\theta$ defined in Section~\ref{ssec:pgf},  let us also introduce the limiting  probabilities  $p_I$ and $p_S$  that  a  randomly selected  edge with one susceptible vertex is of type  $SI$ and $SS$, respectively. 
In this notation, Volz's mean-field equations~\cite{volz2008sir} are: 
\begin{equation}
    \begin{aligned}
        \dot{\theta} &=-\beta p_{I} \theta \\
        \dot{p}_{I} &=\beta p_{S} p_{I} \theta \frac{\psi''(\theta)}{\psi'(\theta)} - \beta p_{I} \left(1 - p_{I}\right) - \gamma p_{I} \\
        \dot{p}_{S} &=\beta p_{S} p_{I} \left(1 - \theta \frac{\psi''(\theta)}{\psi'(\theta)}\right) \\
        x_S &= \psi(\theta) \\
        \dot{x}_I &= \beta p_{I} \theta \psi'(\theta) - \gamma x_I
    \end{aligned}
    \label{eq:VOrig}
\end{equation}
where the derivative with respect to time is marked  with a dot and the derivative with respect to $\theta$ is marked with a prime.
Here $x_S$ and $x_I$ denote the limiting proportions of susceptibles and infecteds, respectively.
Note that the first three equations are decoupled from the remaining two and that the proportion of recovered may be obtained from the conservation relationship.
The initial conditions are 
\begin{equation}
    \begin{aligned}
        x_S(0) = \theta(0) = p_S(0) &= 1\\
        x_I(0) = p_I(0) &= \rho
    \end{aligned}
    \label{eq:Vol_ic}
\end{equation}
where $0 < \rho \ll 1$.

\subsection{DSA model}
An alternative description of the limiting  dynamics of a large  configuration model network under an SIR epidemic was given in~\cite{jacobsen2018large}. 
Although originally considered in the context of multi-layer networks, its single layer version has been applied recently to statistical inference problems under the name of dynamical survival analysis (DSA) model. 
In this approach, the limiting equations are derived in terms of the asymptotic proportions $x_{SI}$ of $SI$-type  and $x_{SS}$ of $SS$-type edges and the additional probability $x_\theta$.
In Appendix~B, we show that the latter coincides with  the probability $\theta$ defined in Section~\ref{ssec:pgf}.
The equations are:
\begin{equation}
    \begin{aligned}
        \dot{x}_\theta &= -\beta \frac{x_{SI}}{ \psi'\left(x_\theta\right)} \\
        \dot{x}_{SS} &=  -2 \beta x_{S I} x_{S S} \frac{ \psi''\left(x_\theta\right)}{ \psi'\left(x_\theta\right)^{2}} \\
        \dot{x}_{SI} &= x_{S I}\Bigg[\beta\big(x_{S S}-x_{S I}\big) \frac{ \psi''\left(x_\theta\right)}{ \psi'\left(x_\theta\right)^{2}} - (\beta + \gamma)\Bigg] \\
        \dot{x}_{S} &= -\beta x_{S I} \\
        \dot{x}_{I} &=  \beta x_{S I}-\gamma x_{I}
    \end{aligned}
    \label{eq:GR}
\end{equation}
As in Volz's system, the first three equations do not depend explicitly on the dynamics of $x_S$ and $x_I$ and therefore may be decoupled from the remaining two equations.
The initial conditions are:
\begin{equation}
    \begin{aligned}
        x_{S}(0) = x_{\theta}(0) &= 1 \\
        x_{I}(0) &= \rho \\
        x_{SS}(0) &= \mu \\
        x_{SI}(0) &= \mu \rho
    \end{aligned}
    \label{eq:GR_ic}
\end{equation}
where $0 < \rho \ll 1$ and $\mu > 0$

\section{Closing the pairwise model}
\label{sec:closure}
In practice, one  needs to define the time dynamics of the triples $[SSI]$ and $[ISI]$ to use the system~\eqref{eq:PW_unclosed}.
Typically, these equations are closed by approximating the dynamics of triples using pairs.
This method is referred to  as the ``pair approximation'' or ``pairwise closure'' in~\cite{jacobsen2018large}.

\subsection{Exact closure condition}
While various justifications of closures have been proposed before ~\cite{kiss2009contact}, we present a slightly different justification that is focused on the PGF of the degree distribution.
The form of the degree distribution plays a key role in obtaining necessary and sufficient conditions on the network to ensure that pairwise closures are exact.

Let $[A_j B_k C_l]$ indicate the number of connected triplets $ABC$ as defined in equation~\eqref{eq:ABC} such that the node in state $A$ has degree $j$, the node in state $B$ has degree $k$, and the node in state $C$ has degree $l$.
Then
\begin{equation}
    [ABC] = \sum_{j,k,l} [A_j B_k C_l]
\end{equation}
and similarly for $[A]$ and $[AB]$.
Let us also define $[S \bu] := [SA]$ where $A \in \{S,I\}$.
We derive a closure condition starting from the finest resolution, where we account for the degree of each node. 
The first approximation is:
\begin{equation}
    [A_j S_k I_l]
    \simeq (k - 1) [A_j S_k] \frac{[S_k I_l]}{k [S_k]}
    \simeq (k - 1) [A_j S_k] \frac{[S I_l]}{[S \bu]}
\end{equation}
where we assume that a degree $k$ susceptible node's neighbour is as likely to be infected as any other susceptible node's neighbour.
Intuitively, the first approximation is valid because we start with an $A_j S_k$ pair and each of the $k - 1$ additional edges connected to the $S_k$ node leads to an $I_l$ node with probability $[S_k I_l] / (k [S_k])$.
The second approximation follows from the configuration model.
These will be used repeatedly in what follows.
Summing over the $l$ index alone we get:
\begin{equation}
   [A_jS_kI] = \sum_{l} [A_j S_k I_l]
   \simeq \sum_{l} (k - 1) [A_j S_k] \frac{[S I_l]}{[S \bu]}
   \simeq (k - 1) [A_j S_k] \frac{[S I]}{[S \bu]}
\end{equation}
A similar approximation to $[A_j S_k]$ and summation over $j$ leads to $[AS_k]=[AS]\times (k[S_k]/[S\bu])$.
This, in turn, leads to:
\begin{equation}
    [A S_k I] = \sum_{j} [A_j S_k I]
    =\sum_{j} (k - 1) [A_j S_k] \frac{[S I]}{[S \bu]}
    \simeq (k - 1) [A S_k] \frac{[S I]}{[S \bu]}
    \simeq (k - 1) k[S_k] \frac{[A S][S I]}{[S \bu]^2}
\end{equation}
Finally, summing over $k$ leads to:
\begin{equation}
    \label{eq:cls0}
    [ASI]=\sum_{k}[AS_kI]= \sum_{k}(k-1)k[S_k]\frac{[AS][SI]}{[S\bu]^2}\simeq \frac{[AS][SI]}{[S\bu]^2}\sum_{k}(k-1)k[S_k]
\end{equation}
It remains to handle $\sum_k k (k - 1) [S_k]$.

Recall that the variable $\theta$  ($x_\theta$ in the DSA model) is the probability that infection has not crossed a randomly chosen edge, and it decreases over time.
A randomly chosen node of degree $k$ will remain susceptible with probability $\theta^k$, so $[S_k] \simeq N p_k \theta^k$ where $p_k$ is the probability mass on $k$ in the degree distribution.
In terms of the degree distribution PGF $\psi$,  for large $N$ we have approximately (see, for instance, \cite{jacobsen2018large}):
\begin{equation}
    \begin{aligned}
        [S] & \simeq N \psi(\theta), \\
        [S \bu] \simeq \sum_k k N p_k \theta^k 
        & = N \theta \psi'(\theta), \\ 
        \sum_{k} (k - 1) k [S_k]
        &\simeq N \theta^2 \psi''(\theta)
    \end{aligned}
\end{equation}
Using these and  \eqref{eq:cls0}  leads to:
\begin{equation}
    \label{eq:cls1}
    [ASI]\simeq\frac{\psi''(\theta)\psi(\theta)}{\psi'(\theta)^2}\frac{[AS][SI]}{[S]}
\end{equation}
Because $\theta$ is a new dynamic variable, an equation for it is needed.
However, we need no more equations as long as
\begin{equation}
    \frac{\psi''(\theta) \psi(\theta)}{\psi'(\theta)^2}
    = \kappa = \text{ constant},
    \label{eq:constant}
\end{equation}
in which case the dependency on $\theta$ is curtailed. 
If $\theta$ is the probability that disease has not crossed a randomly chosen edge, then it follows from equations~\eqref{eq:meandeg} and~\eqref{eq:meanxdeg} that the left-hand side of equation~\eqref{eq:constant} is the mean excess degree of susceptible nodes divided by their mean degree.
Therefore, the  condition above simply implies that this ratio remains constant as the susceptible nodes are depleted over time. 
Below, we show that the networks for which this property holds can be explicitly enumerated.
Remarkably, for such networks,  \eqref{eq:constant} is equivalent to the exact  closure---that is to the asymptotic ($N \to \infty $) equality in~\eqref{eq:cls1}.

\subsection{Poisson-type distributions}
Assuming that $\psi,\psi^\prime >0$ on the domain  $[0,1]$, the condition  \eqref{eq:constant} can be rewritten as 
\begin{equation}
    \frac{\psi''(u)}{\psi'(u)} =\ka \frac{\psi'(u)}{\psi(u)}.
    \label{eq:pgfeq}
\end{equation}
for any $u \in [0,1]$.
Upon integrating, we get the first-order differential equation    
\begin{equation}
    \psi^{\prime}(u) = \alpha\, \psi(u)^\ka
    \label{eq:1}
\end{equation} 
for arbitrary constants $\alpha > 0$ and $\ka > 0$.
Because $\psi$ is analytic, the equation above is defined beyond the original domain---in particular in the small right-side neighbourhood of the natural initial condition $\psi(1) = 1$.  
 
Table~\ref{tab:1} below presents a family consisting of three distributions whose PGFs satisfy the ODE~\eqref{eq:1}.
We refer to these distributions as {\em Poisson-type} (PT) distributions.
It turns out that being the PGF of a PT distribution is necessary and sufficient for \eqref{eq:1} to hold:

\begin{theorem}[Characterization of the Poisson-type distributions] The PGF of a random variable satisfies \eqref{eq:1} iff the random variable belongs to the PT family listed in Table~\ref{tab:1}.
\end{theorem}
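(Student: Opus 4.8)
The plan is to prove both implications, with the reverse direction being a routine check and the forward direction carrying the real content. For the ``if'' direction I would simply take each of the three PGFs tabulated in Table~\ref{tab:1} and verify by direct differentiation that it satisfies \eqref{eq:1} for the appropriate constants $\alpha$ and $\kappa$; for instance, the Poisson PGF $\psi(u)=e^{\lambda(u-1)}$ gives $\psi'=\lambda\psi$, so \eqref{eq:1} holds with $\kappa=1$ and $\alpha=\lambda$. This direction requires no more than the explicit formulas together with the parameter identifications listed in the table.

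For the ``only if'' direction, the starting point is that \eqref{eq:1} is a separable first-order ODE, which I would integrate by distinguishing two regimes. When $\kappa=1$ the equation is linear and yields $\psi(u)=Ce^{\alpha u}$; imposing the normalization $\psi(1)=1$ forces $\psi(u)=e^{\alpha(u-1)}$, which is exactly the Poisson PGF. When $\kappa\neq 1$, separation of variables gives
\begin{equation}
    \psi(u) = \big(a u + b\big)^{1/(1-\kappa)},
\end{equation}
and the condition $\psi(1)=1$ pins down $a+b=1$, leaving a one-parameter family. The sign of the exponent $1/(1-\kappa)$ then splits the analysis: for $\kappa>1$ it is negative, and for $0<\kappa<1$ it is positive.

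The crux is to decide, in each regime, which members of this family are genuine PGFs, i.e.\ admit a power-series expansion about $u=0$ with nonnegative coefficients summing to one. Monotonicity of a PGF on $[0,1]$ (equivalently $\alpha>0$, using the standing assumption $\psi,\psi'>0$) fixes the sign of $a$: I expect $a<0<b$ when $\kappa>1$, and $0\le a,b$ with $a+b=1$ when $0<\kappa<1$, the latter after ruling out $b<0$. In the negative-exponent case I would write $\psi(u)=b^{-r}\big(1+(a/b)u\big)^{-r}$ with $r=1/(\kappa-1)>0$ and $a/b<0$, so that the generalized binomial series has uniformly nonnegative coefficients; this recovers the negative binomial distribution for \emph{every} real $r>0$, with no further restriction. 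The main obstacle lies in the positive-exponent case $0<\kappa<1$: here I must show that $\nu:=1/(1-\kappa)$ being a positive integer is not merely sufficient but necessary. The argument is a sign analysis of the generalized binomial coefficients $\binom{\nu}{k}$ in the expansion of $\big(1+(a/b)u\big)^{\nu}$ with $a/b>0$: when $\nu$ is a nonnegative integer the series terminates and reproduces the binomial PGF, whereas when $\nu$ is not an integer the product $\nu(\nu-1)\cdots(\nu-k+1)$ changes sign once $k$ exceeds $\nu$ and then oscillates, producing infinitely many negative coefficients and hence no valid PGF. Establishing this dichotomy cleanly, and disposing of the degenerate boundary cases $b=0$ and $a=0$, is the technical heart of the proof; once it is in place, the three admissible solution families are precisely the Poisson, binomial, and negative binomial PGFs of Table~\ref{tab:1}, which completes the characterization.
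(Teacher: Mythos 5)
Your proposal is correct and follows essentially the same route as the paper's proof: separate variables in \eqref{eq:1}, impose $\psi(1)=1$, and split into the cases $\kappa=1$, $\kappa\in(0,1)$, and $\kappa>1$ to identify the Poisson, binomial, and negative binomial PGFs. The only difference is one of detail rather than substance: where the paper simply asserts that $\psi^{(s)}(0)\ge 0$ forces $(1-\kappa)^{-1}$ to be a positive integer, you spell out the sign analysis of the generalized binomial coefficients, which is a worthwhile elaboration of the same argument.
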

\begin{proof}
\noindent If $\ka = 1$, the ODE given by \eqref{eq:1} and the PGF condition $\psi(1) = 1$ imply that $\psi(u) = e^{u (\alpha - 1)}$, which is the PGF of the Poisson random variable $\mathrm{POI}(\alpha)$ in Table~\ref{tab:1}.
If $\ka \ne 1$, separating variables and integrating gives us
\begin{equation}
    \frac{\psi(u)^{1 - \ka}}{1 - \ka} = \alpha u+c,
\end{equation}
for some  constant $c$.
Taking into account the condition $\psi(1) = 1$, we get
\begin{equation}
    \psi(u) = [\alpha (1 - \ka) (u - 1) + 1]^{\frac{1}{1 - \ka}}.
\end{equation}
Consider now separately the  cases when $\ka < 1$ and $\ka > 1$.
\begin{itemize}
    \item \textbf{Case $\ka \in (0, 1)$:}
    Because $\psi^{(s)}(0) \ge 0$ for each integer $s \ge 0$, we must have $n = (1 - \ka)^{-1}$ is a positive integer and $\alpha (1 - \ka) \leq 1$.
    Writing
    \begin{equation}
        \psi(u) =
        \big[1 - \alpha (1 - \ka) + \alpha (1 - \ka) u\big]^n,
    \end{equation}
    we recognise  $\psi(u)$ as the PGF of the binomial random variable $\mathrm{BINOM}(n,p)$ with $p=\alpha(1 - \ka)$. 
    
    The special case $p = 1$, which corresponds to an $n$-regular  degree distribution, violates our initial assumptions that $\psi(0) > 0$ and $\psi'(0) > 0$, but we can check directly that the $\mathrm{BINOM}(n, 1)$ PGF $\psi(u) = u^n$ satisfies \eqref{eq:1}.
    When $u \neq 0$, it also satisfies~\eqref{eq:constant} and~\eqref{eq:pgfeq}.
    Hence, we allow for $\mathrm{BINOM}(n, p)$ with $p = 1$ in Table~\ref{tab:1}.
    
    \item \textbf{Case $\ka > 1$:} Writing 
    \begin{equation}
        \psi(u) =
        \left[\frac{\frac{1}{\alpha (\ka - 1) + 1}}{1 - \frac{\alpha (k - 1)}{\alpha (\ka - 1) + 1}u}\right]^{\frac{1}{\ka-1}},
    \end{equation}
    we recognise $\psi(u)$ as the PGF of the  negative binomial distribution $\mathrm{NB}(r,p)$ with $r=\frac{1}{\ka - 1}$  and $p=\frac{\alpha(\ka-1)}{\alpha(\ka - 1) + 1}$.
    Note that here necessarily $p<1$.
\end{itemize}
From considering all possible values of $\kappa$, it follows that there are only three PGF solutions to equation \eqref{eq:1}, which correspond to the families of random variables listed in Table~\ref{tab:1}.
\end{proof}

\begin{table}
\centering
\caption{The PT random variables whose PGFs satisfy \eqref{eq:1}. }
\label{tab:1}
    \begin{tabular}{llc}
        \toprule
        Condition           & Family 
        & Parameters \\
        \midrule
        $\ka \in (0, 1)$
        & Binomial: $\mathrm{BINOM}(n,p)$ 
        &  $n=\frac{1}{1-\ka}$,  $p=\alpha(1-\ka)$ \\[5pt]
        $\ka = 1$
        & Poisson: $\mathrm{POI}(\lambda)$ 
        & $\lambda=\alpha$ \\[5pt]
        $\ka>1$
        & Negative binomial: $\mathrm{NB}(r,p)$ 
        &  $r=\frac{1}{\ka-1}$,  $p=\frac{\alpha(\ka-1)}{\alpha(\ka-1)+1}$\\ 
        \bottomrule
    \end{tabular}
\end{table}
\subsection{Closure theorem and models equivalence}
We are now in a position to state the main result on the exactness of the pairwise closure.
Here, we use ``exactness'' in the sense defined in~\cite{jacobsen2018large}, or, equivalently, in~\cite{janson2014law}.
In both cases, the notion implies that the appropriately scaled stochastic vector of susceptibles, infecteds, and recovereds tends in an appropriate sense to a deterministic vector whose components are described by the system of ordinary differential equations given by~\eqref{eq:VOrig} or~\eqref{eq:GR}.
Yet another equivalent definition of the exact closure is that the equality in the triple approximation condition~\eqref{eq:cls2} holds upon dividing both its sides by $N$ and taking the limit $N \to \infty$.

\begin{theorem}[Exact pairwise closure] The closure condition 
\begin{equation}\label{eq:cls2}
    [ASI]\simeq\kappa \frac{[AS][SI]}{[S]}    
\end{equation}
for $A \in \{S, I\}$ in the pairwise model given by system~\eqref{eq:PW_unclosed} is exact (that is,  equality in \eqref{eq:cls2} with both sides  multiplied by $N^{-1}$ holds asymptotically as $N\to \infty$) iff the underlying configuration model network has a Poisson-type (PT) degree distribution.
Furthermore, $\kappa = (n - 1) / n < 1$, $\kappa = 1$, or $\kappa = (r + 1) / r > 1$ if the degree distribution is $\mathrm{BINOM}(n,p)$, $\mathrm{POI}(\lambda)$, or $\mathrm{NB}(r,p)$, respectively.

\end{theorem}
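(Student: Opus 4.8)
The plan is to combine the asymptotic computation already carried out in~\eqref{eq:cls1} with the characterization theorem proved above. The configuration-model approximations leading to~\eqref{eq:cls1} give, after dividing by $N$ and letting $N\to\infty$,
\[
[ASI] \simeq \frac{\psi''(\theta)\,\psi(\theta)}{\psi'(\theta)^2}\,\frac{[AS][SI]}{[S]},
\]
so the proposed closed form~\eqref{eq:cls2} with a \emph{constant} coefficient $\kappa$ coincides with the exact limit precisely when the ratio $\psi''(\theta)\psi(\theta)/\psi'(\theta)^2$ equals $\kappa$ at the value $\theta=\theta(t)$ attained at each time (the same for $A=S$ and $A=I$, since the coefficient in~\eqref{eq:cls1} does not depend on $A$). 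I would therefore reduce the whole statement to deciding when this ratio stays constant along the epidemic trajectory, which is exactly condition~\eqref{eq:constant}.

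For the ``if'' direction I would assume the degree distribution is PT. By the characterization theorem its PGF satisfies~\eqref{eq:1}, equivalently~\eqref{eq:constant}, so $\psi''(u)\psi(u)/\psi'(u)^2$ is identically a constant $\kappa$ on $[0,1]$. Substituting this into the limit of~\eqref{eq:cls1} yields~\eqref{eq:cls2} with that same $\kappa$ for every $t$, so no dynamic variable beyond $\theta$ enters and the closed pairwise system is exact in the $N\to\infty$ sense.

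For the ``only if'' direction I would assume~\eqref{eq:cls2} is exact for all $t\ge 0$ and compare with the exact limit of~\eqref{eq:cls1}; this forces $\psi''(\theta(t))\psi(\theta(t))/\psi'(\theta(t))^2=\kappa$ for every $t$. Since $\theta(0)=1$ and $\dot\theta=-\beta p_I\theta<0$ on an initial interval, the values $\theta(t)$ sweep a nondegenerate subinterval of $[0,1]$. Because $\psi$ is analytic with $\psi,\psi'>0$ there, the analytic function $u\mapsto \psi''(u)\psi(u)/\psi'(u)^2-\kappa$ vanishes on an interval and hence identically, which is~\eqref{eq:constant}/\eqref{eq:1}; the characterization theorem then returns a PT distribution. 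Finally, to pin down $\kappa$ I would substitute each PGF directly: $\psi=e^{\lambda(u-1)}$ gives $1$; $\psi=(1-p+pu)^n$ gives $(n-1)/n$; and $\psi\propto(1-pu)^{-r}$ gives $(r+1)/r$, matching the claimed values.

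The hard part will be the logical heart of the ``only if'' direction: passing from ``exact for all $t$'' to the pointwise identity in $u$. This rests on two points I would need to justify carefully. First, that the configuration-model derivation of~\eqref{eq:cls1} really delivers the stated limiting proportionality coefficient and not merely a bound, which I would support by invoking the exactness results of~\cite{janson2014law} and~\cite{jacobsen2018large} cited earlier. Second, that $\theta(t)$ genuinely traverses an interval rather than sticking at a point, so that the analytic-continuation step is legitimate; the strict monotone decrease of $\theta$ from $1$ under the initial data~\eqref{eq:Vol_ic} secures this, but I would confirm the trajectory is nondegenerate before concluding.
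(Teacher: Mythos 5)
Your proposal is correct, but it follows a genuinely different route from the paper's proof. You argue entirely at the level of the triple formula~\eqref{eq:cls1}: the closure~\eqref{eq:cls2} is exact iff the coefficient $\psi''(\theta)\psi(\theta)/\psi'(\theta)^2$ equals the constant $\kappa$ along the trajectory $\theta(t)$, and you then upgrade this to the pointwise identity~\eqref{eq:constant} on all of $[0,1]$ by noting that $\theta(t)$ strictly decreases from $1$ (so its range is a nondegenerate interval) and applying the identity theorem for the analytic function $u \mapsto \psi''(u)\psi(u)/\psi'(u)^2 - \kappa$, after which the characterization theorem yields the PT family. The paper instead proves the ``if'' direction by a model-equivalence argument: using Table~\ref{tab:PT-type-equiv} to write $\psi''(x_\theta)/\psi'(x_\theta)^2 = \kappa/x_S$ for PT distributions, it shows the closed pairwise equations for $[SI]$ and $[SS]$, scaled by $N^{-1}$, coincide term-for-term with the DSA system~\eqref{eq:GR}, and then inherits exactness from the known exactness of the DSA limit in~\cite{jacobsen2018large}; the ``only if'' direction is left implicit in the Section~\ref{sec:closure} derivation plus the characterization theorem. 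What each route buys: the paper's argument anchors exactness in a rigorously established LLN for the \emph{pair-level} DSA system and so never needs to certify~\eqref{eq:cls1} itself as an exact limit, whereas your argument leans directly on~\eqref{eq:cls1}, whose derivation in the paper is a chain of configuration-model approximations --- you correctly flag that this is the point requiring the cited LLN results, and it is the one place where your proof's rigor bottoms out in the same references the paper uses. In compensation, your treatment of the converse is more explicit and more careful than the paper's (the nondegeneracy of the range of $\theta$ and the analytic-continuation step are exactly the points the paper glosses over), and your direct substitution of the three PGFs to obtain $\kappa = (n-1)/n$, $1$, and $(r+1)/r$ matches Table~\ref{tab:PT-type-equiv}.
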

\begin{proof}
Consider first evaluation of the $\psi''\left(x_{\theta}\right)/ \psi'\left(x_{\theta}\right)^{2}$ term.
In Table~\ref{tab:PT-type-equiv}, we show that this term is equivalent to $\kappa/x_{S}$ for all PT distributions.
With this in mind, we are ready to show the equivalence between the limiting pairwise model and the DSA models under \eqref{eq:cls2}.

\begin{table}[h!]
    \centering
    \caption{Resolving the $ \psi''\left(x_{\theta}\right)/ \psi'\left(x_{\theta}\right)^{2}$ term in the DSA equations for binomial, Poisson, and negative binomial distributions.
    Note that $\psi(x_{\theta})=x_S$.}
    \label{tab:PT-type-equiv}
    \begin{tabular}{cccc}
        \toprule
                        & \text {Binomial}
                        & \text {Poisson}       
                        & \text {Negative binomial}
                        \\[5pt]
        \midrule
        \text {Parameter(s)}    
                        & $(n, p)$ 
                        & $\lambda$             
                        & $(r, p)$ 
                        \\[5pt]
        $\psi(x)$               
                        & $(1 - p + p x)^n$
                        & $e^{\lambda(x - 1)}$  
                        & $\big(\frac{1-p}{1-p x}\big)^{r}$ 
                        \\[5pt]
        $\psi'(x)$              
                        & $n p (1 - p + p x)^{n - 1}$ 
                        & $\lambda e^{\lambda(x - 1)}$ 
                        & $\frac{r p (1 - p)^r}{(1 - p x)^{r + 1}}$ 
                        \\[5pt]
        $\psi''(x)$             
                        & $n (n - 1) p^2 (1 - p + p x)^{n-2}$ 
                        & $\lambda^2 e^{\lambda (x - 1)}$ 
                        & $\frac{r (r + 1) p^2 (1 - p)^r}
                            {(1 - p x)^{r + 2}}$ 
                        \\[5pt]
        $\frac{\psi''(x_{\theta})}{\psi'(x_{\theta})^2}$ 
                        & $\frac{n-1}{n} \times \frac{1}{x_S}$
                        & $1 \times \frac{1}{x_S}$ 
                        & $\frac{r + 1}{r} \times \frac{1}{x_S}$
                        \\[5pt]
        $\kappa$                
                        & $\frac{n - 1}{n}$ 
                        & $1$ 
                        & $\frac{r + 1}{r}$
                        \\
        \bottomrule
    \end{tabular}
\end{table}


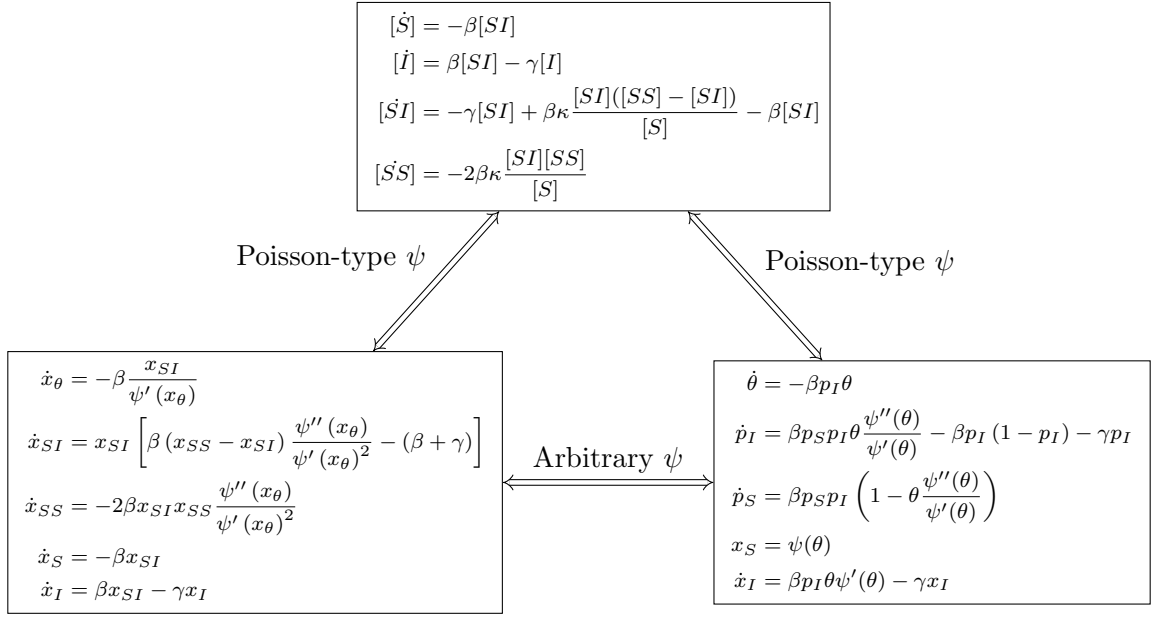
\begin{figure}[h!]
\centering
\begin{tikzpicture}
\node [rectangle,draw] (a) at (0, 5) {
\scriptsize{
$\begin{aligned}
[\dot{S}] &= -\beta[SI]\\
[\dot{I}] &= \beta[SI]-\gamma[I]\\
[\dot{SI}] &= -\gamma[SI] + \beta\ka\frac{[SI]([SS]-[SI])}{[S]} - \beta[SI]\\
[\dot{SS}] &= -2\beta\ka\frac{[SI][SS]}{[S]}
\end{aligned}$}}; 
\node [rectangle,draw] (b) at (-4.5, 0){
\scriptsize{
$\begin{aligned}
\dot{x}_{\theta}&= -\beta \frac{x_{SI}}{ \psi'\left(x_{\theta}\right)}\\
\dot{x}_{SI}&= x_{S I}\left[\beta\left(x_{S S}-x_{S I}\right) \frac{ \psi''\left(x_{\theta}\right)}{ \psi'\left(x_{\theta}\right)^{2}}-(\beta+\gamma)\right] \\
\dot{x}_{SS}&=  -2 \beta x_{S I} x_{S S} \frac{ \psi''\left(x_{\theta}\right)}{ \psi'\left(x_{\theta}\right)^{2}}\\
\dot{x}_{S}&= -\beta x_{S I}\\
\dot{x}_{I}&=  \beta x_{S I}-\gamma x_{I} 
\end{aligned}$
}
};
\node [rectangle,draw] (c) at (4.5, 0){
\scriptsize{
$\begin{aligned}
\dot{\theta}&=-\beta p_{I} \theta\\
\dot{p}_{I}&=\beta p_{S} p_{I} \theta \frac{\psi^{\prime \prime}(\theta)}{\psi^{\prime}(\theta)}-\beta p_{I}\left(1-p_{I}\right)-\gamma p_{I}\\
\dot{p}_{S}&=\beta p_{S} p_{I}\left(1-\theta \frac{\psi^{\prime \prime}(\theta)}{\psi^{\prime}(\theta)}\right)\\
x_S&=\psi(\theta)\\
\dot{x}_I&=\beta p_{I} \theta \psi^{\prime}(\theta)-\gamma x_I
\end{aligned}$
}
};
    \draw[implies-implies,double equal sign distance] (a) -- (b) node[midway, above left] {Poisson-type $\psi$};
    \draw[implies-implies,double equal sign distance] (a) -- (c) node[midway, above right] {Poisson-type $\psi$};
 \draw[implies-implies,double equal sign distance] (b) -- (c) node[midway,above] {Arbitrary $\psi$};
\end{tikzpicture}
\caption{Summary of model equivalence results. 
The top is the pairwise model, the bottom left is the DSA model, and the bottom right is the Volz model.}
\label{Fig:ModelEquiv}
\end{figure}

Let us show equivalence between the evolution equations for $[SI]$ and $x_{SI}$. 
Recall that  $x_{A} =\lim_{N\to \infty} [A]/N$ and $x_{AB} =\lim_{N\to \infty} [AB]/N$ and that these limits exist uniformly in probability over any finite time interval~\cite{jacobsen2018large}. 
From the equation for ${SI}$ in the pairwise model and~\eqref{eq:cls2} 
\begin{equation*}
    \begin{aligned}
        \dot{[SI]}
        & = \beta \big([SSI] - [ISI]\big) - [SI](\beta + \gamma) \\
        & = [SI] \left[\beta \big([SS] - [SI]\big) \frac{\kappa}{[S]} - (\beta + \gamma)\right].
    \end{aligned}
    \label{eq:ss1}
\end{equation*}
Dividing both sides of the last equation by $N$, taking the limit $N \to \infty$ (which can be done in view of the appropriate LLN, see~\cite{jacobsen2018large}),  and using the fact  that  $\psi''\left(x_{\theta}\right) /  \psi'\left(x_{\theta}\right)^2 = \kappa / x_S$,  we arrive at: 
\begin{equation*}
    \begin{aligned}
         \dot{x}_{SI} & =  x_{SI} \bigg[\beta\left(x_{S S}-x_{S I}\right) \frac{\kappa}{x_{S}}-(\beta+\gamma)\bigg] \\
        &=  x_{SI}\left[\beta\left( x_{S S} -  x_{S I}\right)   \frac{\psi''\left(x_{\theta}\right)}{\psi'\left(x_{\theta}\right)^2} - (\beta + \gamma)\right] 
    \end{aligned}
\end{equation*}
which is identical to the equation for $x_{SI}$ in the DSA model.
We note that, when the degree distribution is PT, the equation for $x_{\theta}$ is no longer needed and the equivalences of the remaining equations follow similarly as above.
For instance, from \eqref{eq:cls2} and $\dot{[SS]}  = -2 \beta [SSI]$ it follows that $\dot{x}_{SS} = -2 \beta \kappa x_{SS}  x_{SI} /x_S = -2 \beta  x_{SS}  x_{SI}\psi''\left(x_{\theta}\right)/\psi'\left(x_{\theta}\right)^2$. 
The exactness of the DSA model~\cite{jacobsen2018large} as the limit of the stochastic SIR model on configuration network, implies thus the exactness of the scaled pairwise model (and \eqref{eq:cls2}) as  $N\to \infty$.
\end{proof}
  
Figure~\ref{Fig:ModelEquiv} summarises the model equivalences.
The equivalence of the Volz model and the DSA model for arbitrary degree distributions is shown in the Appendix.

The top row of Figure~\ref{fig:Closure_for_PT_type_nets} shows numerical evidence of the exactness of the closure in the pairwise model for PT-type networks.
For PT-type networks, the agreement between the pairwise model and the expected value of explicit stochastic simulations is excellent.
The DSA model continues to work well for non PT-type networks (see bottom row of Figure~\ref{fig:Closure_for_PT_type_nets}), and it is clear that $\kappa$ is not constant in time in this case.
As expected, this means that none of the three possible closures work.
In the left panel of the bottom row we plot the output from the pairwise model for $\kappa = (n - 1) / n$ (dashed line) and $\kappa = 1$ (dotted line).
Both underestimate prevalence which in this case is driven by the 20\% of highly connected nodes.
This is captured poorly by both closures.

\begin{figure}[]
     \centering
     \includegraphics[width=0.30\textwidth]{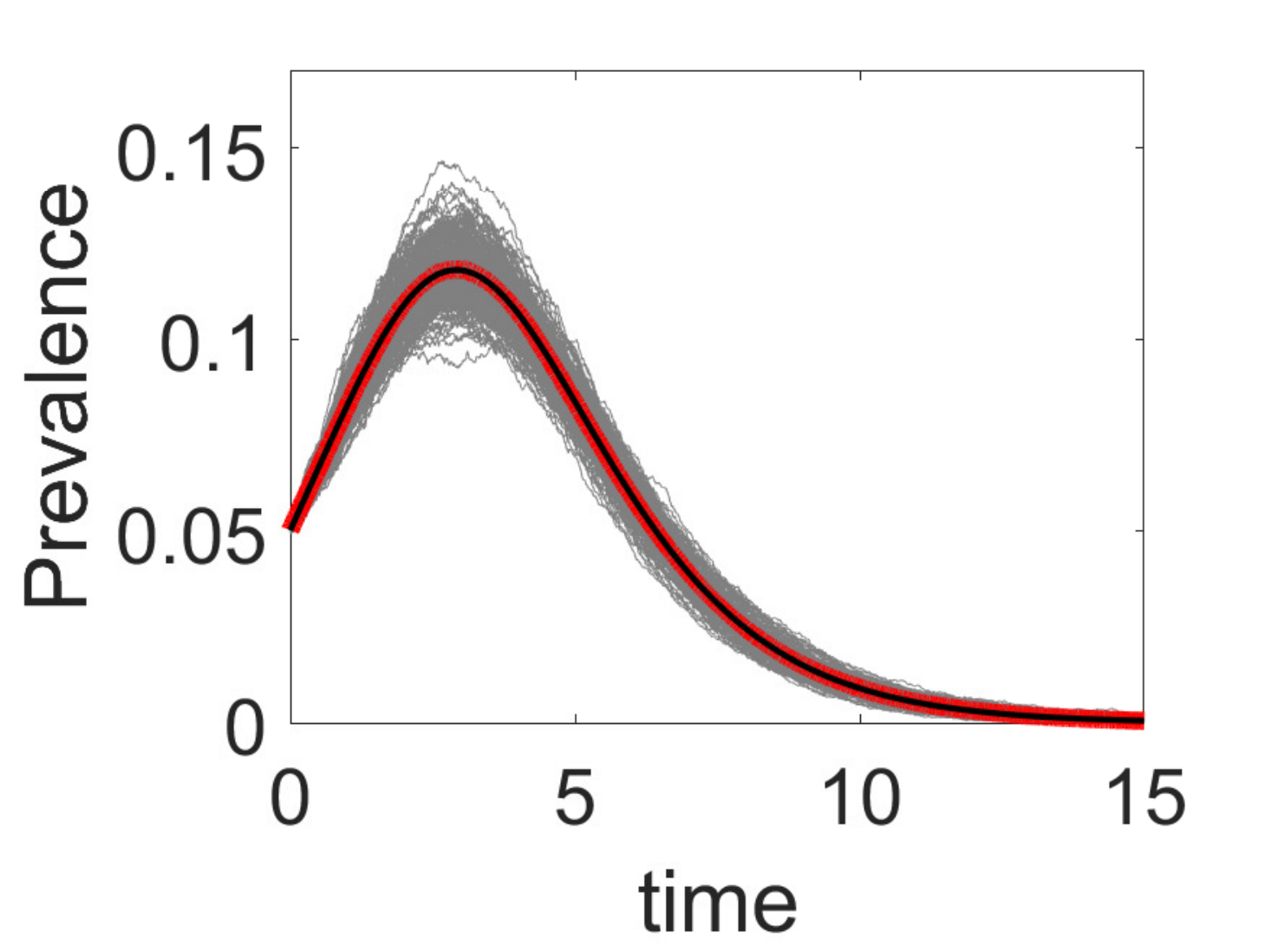}
     \includegraphics[width=0.30\textwidth]{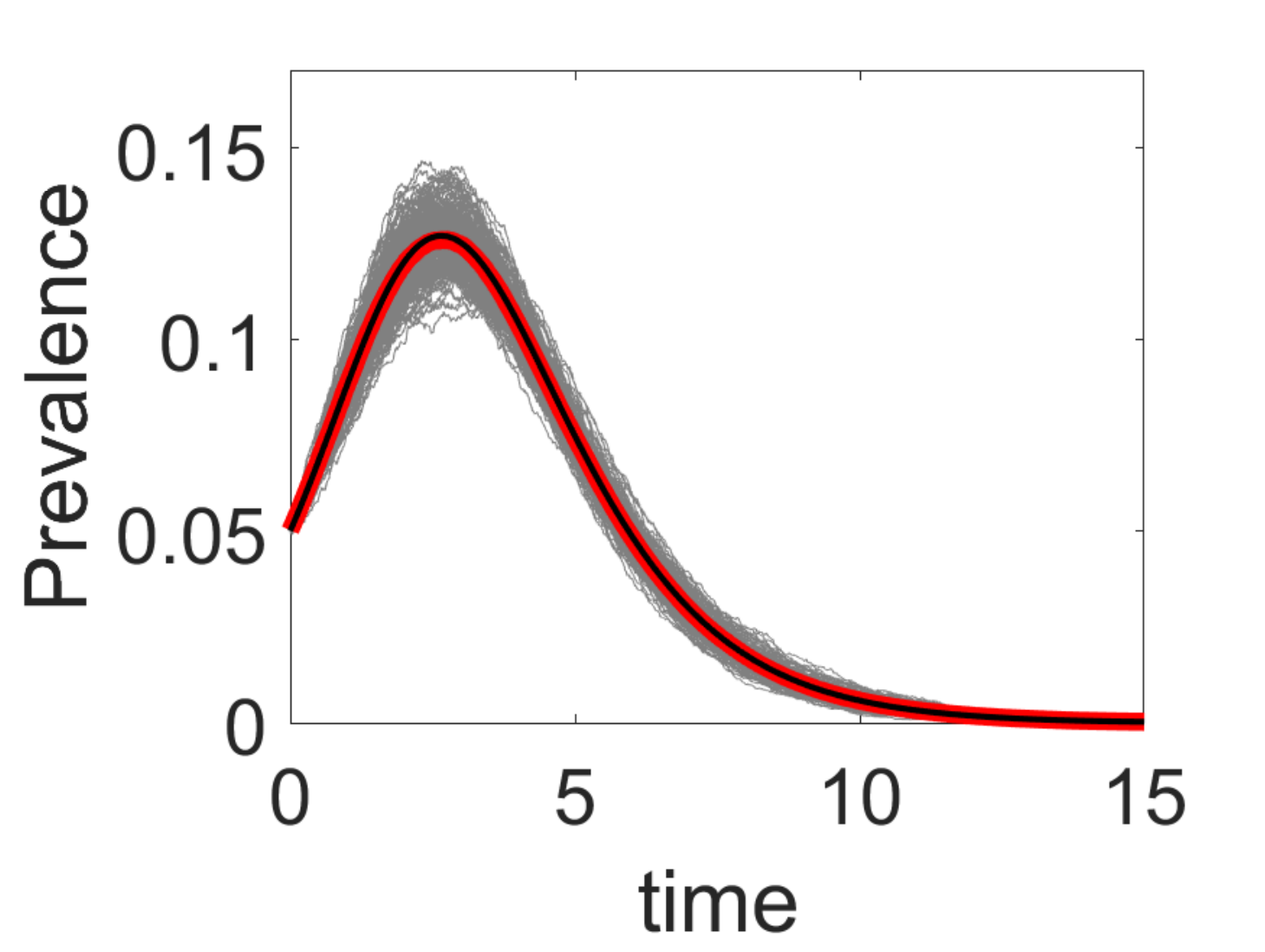}
     \includegraphics[width=0.30\textwidth]{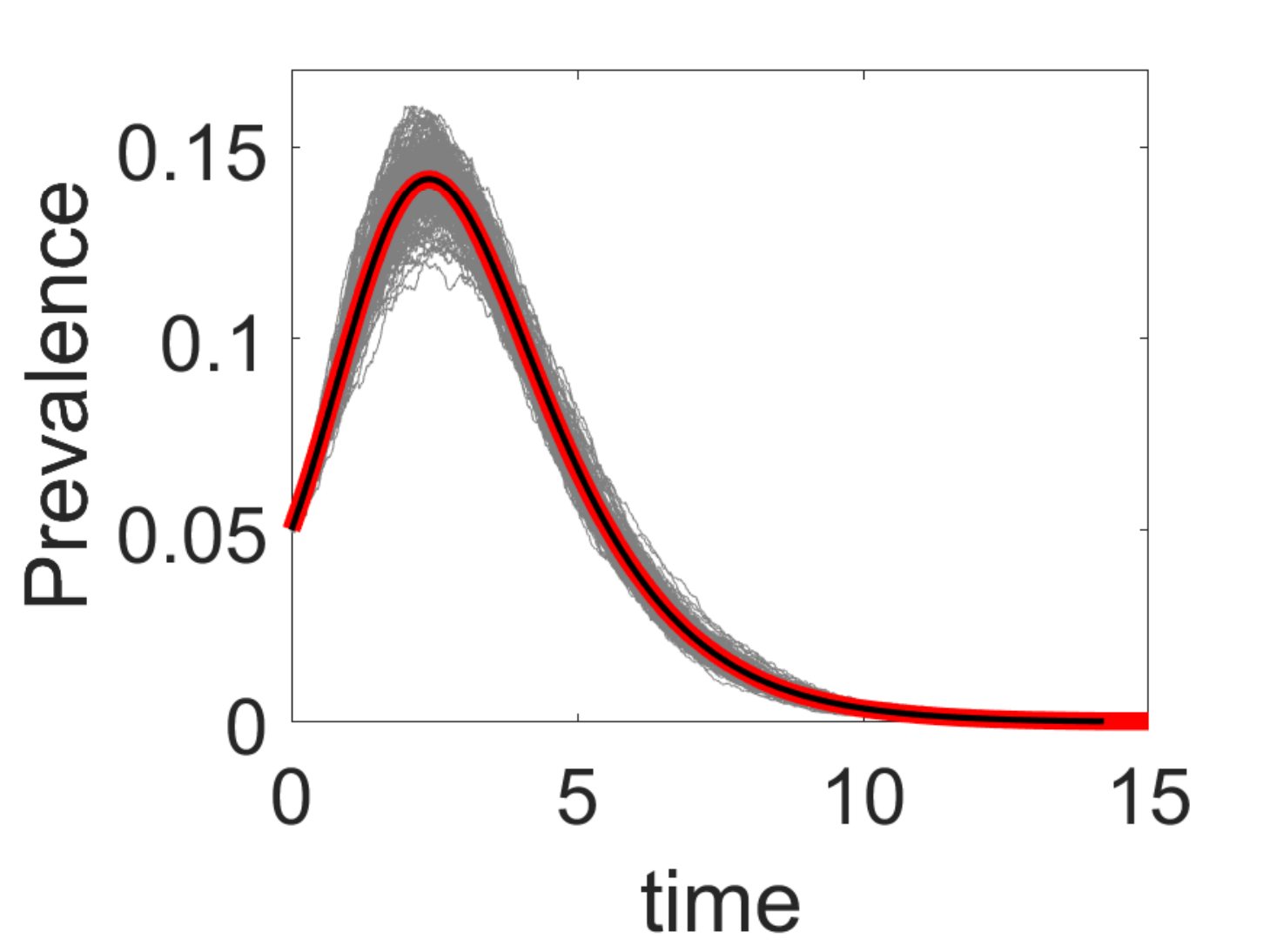}
    \includegraphics[width=0.40\textwidth]{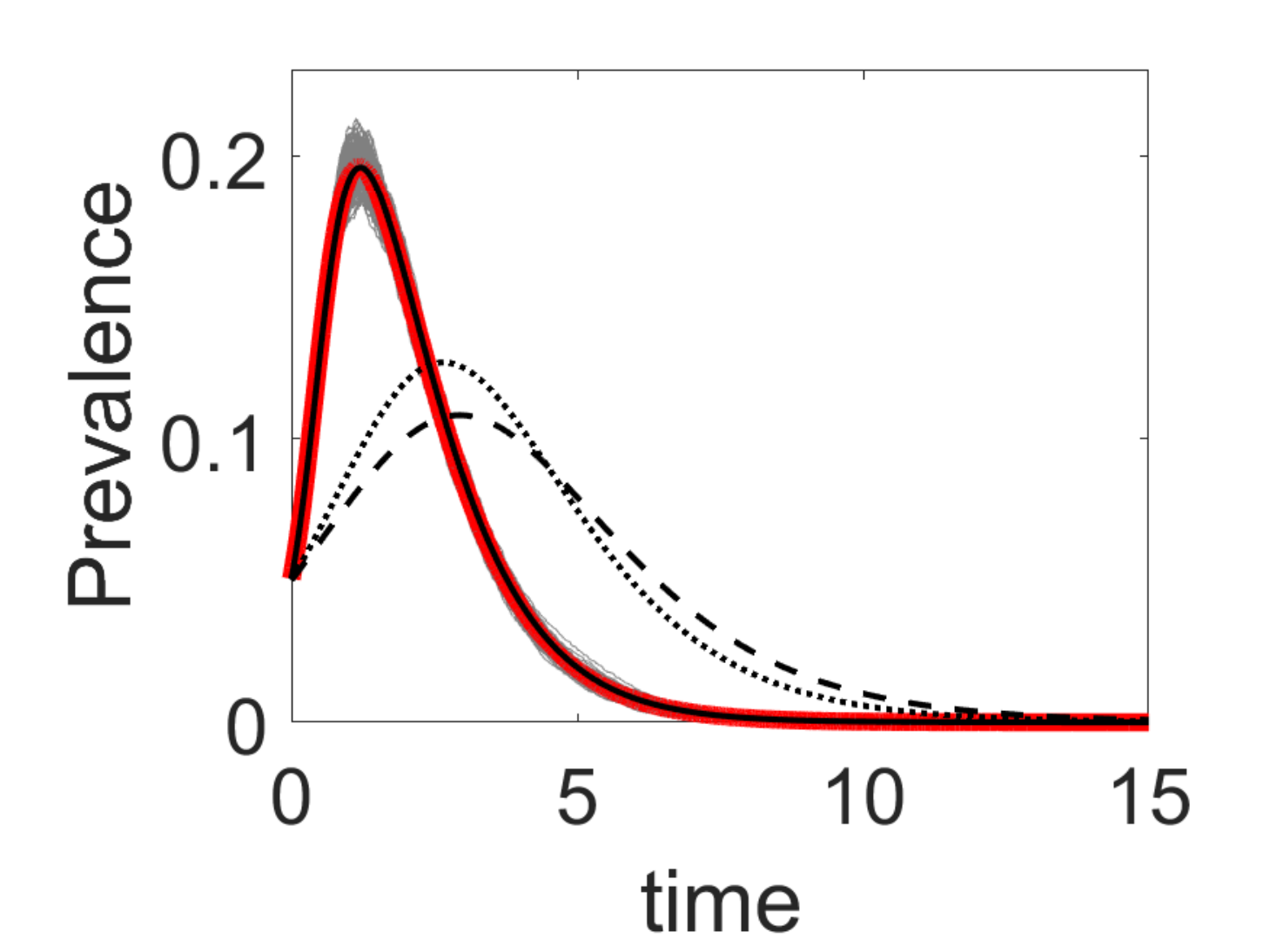}
    \includegraphics[width=0.40\textwidth]{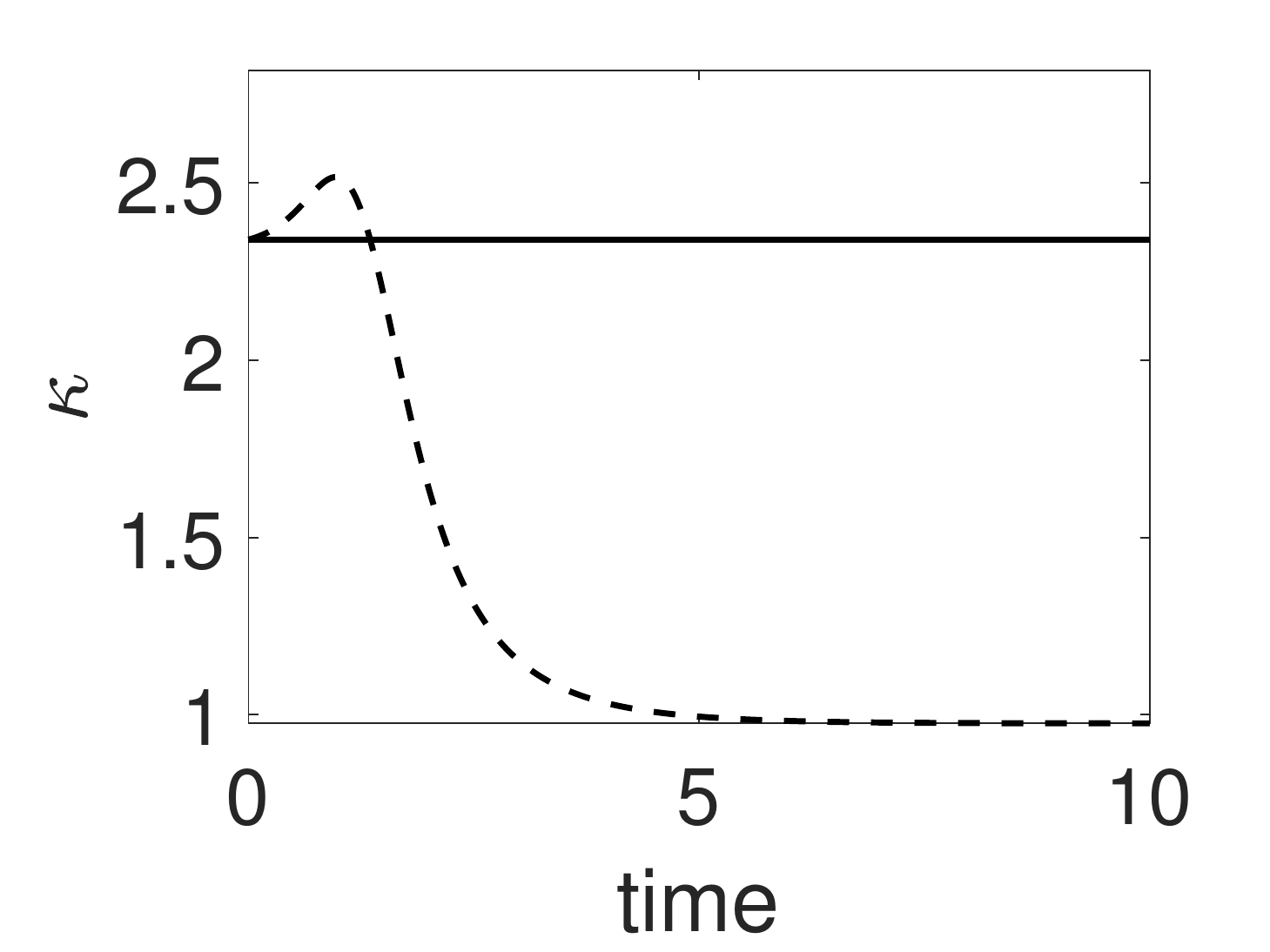}
     \caption{
        {\bf Top row:} 
        From left to right, epidemics on networks with regular (each node has $n = 6$ links), Poisson ($\lambda = 10$) and negative binomial ($r = 10$ and $p = 1/2$) degree distributions are plotted, respectively.
        Individual stochastic realisations are plotted with thin grey lines, their mean with the thick red line, and the solution of the corresponding pairwise model with a solid black line. 
        {\bf Bottom left:}
        Epidemics on a network where 80\% of nodes have degree 4 and the rest have degree 34.
        The DSA model (solid black line) is used to match the average epidemics.
        The pairwise model closures with $\kappa = (n - 1) / n$ (dashed line) and $\kappa = 1$ (dotted line) are also plotted.
        {\bf Bottom right:}
        Plot showing that, for a non-PT-type network (like the network with two distinct value for the degree of nodes), $\kappa$ is not constant in time.
        The value of $\kappa$ at time $t=0$ (solid constant line) is short lived as shown $\kappa(t)$ (dashed line) as given in equation~\eqref{eq:constant}.
        Other parameters are: $N = 10000$ nodes, recovery rate $\gamma = 1$, and per-contact transmission rate $\beta = 0.4$ for the regular network and $\beta = 0.2$ for the networks with Poisson, negative binomial, and mixed degree distributions.
        Epidemics start with 250 infected nodes chosen at random and only epidemics reaching 500 infected individuals are retained.
        We average over 15 network realisations and 15 epidemics on each network.
    }
    \label{fig:Closure_for_PT_type_nets}
\end{figure}


\section{Survival analysis perspective}
\label{sec:DSAper}


The exact closure condition implies that, under the assumption of a PT-type degree distribution, the pairwise and DSA models are equivalent.
One of the benefits of this equivalence is that the pairwise model shares the statistical interpretation of the DSA model.
Indeed, as shown in~\cite{khudabukhsh2020survival} (see also discussion with examples in~\cite{choi2019modeling, bastian2020throwing,di2022dynamic,vossler2022analysis}), we can interpret the system of equations~\eqref{eq:GR} in terms of a statistical model for times to infection.
To this end, as in~\cite{khudabukhsh2020survival}, we may consider $S_t := x_S(t)$ as the survival probability of a typical node (i.e., the probability  that a typical node who was susceptible at time $t = 0$ remains susceptible at time $t > 0$).
Note that  $S_0 = 1$ follows from the assumed initial conditions~\eqref{eq:GR_ic} and that $S_\infty > 0$, so $S_t$ is an improper survival function.
In this section, we show how to derive a single autonomous differential equation for $S_t$ (or $x_S$) that allows numerical calculation of the survival probability for any $t\in [0, \infty)$ solely in terms of the network model parameters.    
We achieve this in several steps:
First, we derive an integral that relates $x_{SS}$ and $x_{S}$.
Because $\psi''\left(x_{\theta}\right) /  \psi'\left(x_{\theta}\right)^2 = \kappa / x_S$ under the pairwise closure condition, we obtain:
\begin{equation*}
    \frac{\dot{x}_{SS}}{\dot{x}_S}= 2 \ka \frac{x_{SS}}{x_S}
\end{equation*}
Integrating this and using the initial conditions $x_S(0) = 1$ and $x_{SS}(0) = \mu$ leads to:
\begin{equation*}
    x_{SS}(t) = \mu x_{S}(t)^{2 \ka}
    \label{eq:SSFirstIntegral}
\end{equation*}
Second, the equations for $x_{S}$ can be rewritten as:
\begin{equation*}
    \dot{x}_{S} = - \beta \frac{x_{SI}}{x_{S}}x_{S}
    = -\beta x_D x_S
\end{equation*}
where $x_{D}=x_{SI}/x_{S}$ is considered a new variable for which an evolution equation is needed.
Considering the derivative of $x_{D}$, and plugging in the expressions for $\dot{x}_{SI}$ and $\dot{x}_{S}$, we obtain:
\begin{equation*}
    \begin{aligned}
        \dot{x}_{D}
        &=\frac{\dot{x}_{SI}x_{s}-x_{SI}\dot{x}_{S}}{x_{S}^2} \\
        &= \frac{\beta \kappa x_{SS}x_{SI}-\beta \kappa x_{SI} x_{SI} - (\beta + \gamma) x_{SI} x_{S} + \beta x_{SI}^2}{x_{S}^2} \\
        &= \beta \kappa \mu x_{s}^{2\kappa-1} \left(\frac{x_{SI}}{x_{S}}\right) - \beta \kappa \left(\frac{x_{SI}}{x_{S}}\right)^2 - (\beta + \gamma) \left(\frac{x_{SI}}{x_{S}}\right) + \beta \left(\frac{x_{SI}}{x_{S}}\right)^2 \\
        &= \beta (1 - \kappa) x_{D}^2 + \left(\beta \kappa \mu x_{S}^{2\kappa - 1} - (\beta + \gamma)\right) x_{D}
    \end{aligned}
\end{equation*}
Given that the equations for $x_{SI}$ and $x_{SS}$ no longer depend on $x_{\theta}$, the system now simplifies to three key equations:
\begin{equation*}
    \begin{aligned}
        \dot{x}_{S} &= -\beta x_{D}x_{S}, \\
        \dot{x}_{I} &= \beta x_{D} x_{S} - \gamma x_{I}, \\
        \dot{x}_{D} &= \beta (1 - \kappa) x_{D}^2 + \left(\beta \kappa \mu x_{S}^{2\ka - 1} - (\beta + \gamma)\right) x_{D} \\
    \end{aligned}
\end{equation*}
Finally, we can manipulate these equations further.
In particular, looking at 
\begin{equation*}
    \frac{\dot{x}_{D}}{\dot{x}_{S}} + (1 - \ka) \frac{x_{D}}{x_{S}} 
    = -\kappa \mu x_{S}^{2\kappa-2} + \frac{\beta + \gamma}{\beta} \frac{1}{x_{S}},
\end{equation*}
and considering $x_{D}$ as a function of $x_{S}$, we can use an integrating factor.
This leads to:
\begin{equation} \label{eq:StKeqOne}
    -\dot{S}_t = 
    \begin{cases}
        \tilde{\beta}(1-S_{t}^{\kappa})S_{t}^{\kappa}+\frac{\tilde{\gamma}}{1-\kappa}S_{t}(1-S_{t}^{\kappa-1})+\tilde{\rho} S_{t}^{\kappa} 
        &\text{if } \kappa\ne 1, \\
        \tilde{\beta} (S_t - S_{t}^{2}) + \tilde{\gamma} S_{t} \log S_{t} + \tilde{\rho} S_{t}
        &\text{if } \kappa = 1 ,
    \end{cases}
   \end{equation}
where  we now replaced $x_{S}(t)$ by $S_{t}$ and let $\tilde{\rho} = \beta \mu \rho$, $\tilde{\gamma}=\beta+\gamma$, and $\tilde{\beta}=\mu \beta$.
As already noted, the initial condition 
inherited from~\eqref{eq:GR_ic} is $S_0 = 1$.
Because necessarily $\dot{S}_\infty=0$, equation~\eqref{eq:StKeqOne}  implies that the limiting value $S_\infty>0$ has to satisfy
\begin{equation*}\label{eq:FinalSize}
   \begin{aligned}
        \tilde{\beta}(1-S_{\infty}^{\kappa})+\tilde{\rho} = \frac{\tilde{\gamma}}{1-\kappa}(1-S_{\infty}^{1-\kappa})
        &\qquad \text{if } \kappa\ne 1, \\
        \tilde{\beta} (1 - S_{\infty}) + \tilde{\rho} = -\, \tilde{\gamma}\,  \log S_{\infty} 
        &\qquad \text{if } \kappa = 1.
    \end{aligned}
\end{equation*}
It is of interest to note that when the degree distribution is Poisson ($\kappa = 1$), then equation~\eqref{eq:StKeqOne} is identical to that known from mass-action SIR dynamics.

This analysis shows that  the dynamics of an SIR epidemic 
on a configuration model network with a PT degree distribution can be 
summarised with  a single self-contained survival equation describing the evolution of survival probability $S_t$.
This leads to the following interesting statistical consideration that was already noted for mass-action SIR models in~\cite{khudabukhsh2020survival,di2022dynamic,vossler2022analysis}.
Assuming that, over a time interval $[0, T]$ where $T \le \infty$, we observe the times of infection $(t_1, \ldots, t_k)$ of a randomly selected set of $k$ initially susceptible nodes of our network, we may write the approximate log-likelihood function as 
\begin{equation*}
    \ell(\tilde{\beta},\tilde{\gamma},\tilde{\rho}\vert t_1,\ldots,t_k) =\sum_{i=1}^k \log S_{t_i}-k\log(1-S_T)
\end{equation*}
To obtain quantities other than $S_t$, evaluation of additional  ODEs is needed as discussed, for instance, in~\cite{khudabukhsh2020survival} or~\cite{hosp22}.
Let us also note that, since the DSA and Volz models are equivalent (see Lemma~\ref{lem:equiv} in the Appendix), the representation $S_t$ in ~\eqref{eq:StKeqOne} can be similarly  derived directly from  Volz's model~\eqref{eq:VOrig}.

\section{Discussion}


Over the last two decades, two types of disease network models have emerged as particularly relevant in many practical applications (including the COVID-19 pandemic): the so-called pairwise~\cite{rand1999,Keeling1999} and edge-based~\cite{volz2008sir,miller2012edge} approaches.
More recently, a  version of an edge-based approach,  dubbed DSA, was proposed in~\cite{jacobsen2018large,khudabukhsh2020survival} to facilitate statistical inference.

In this paper, we have shown that the three approaches are equivalent and asymptotically exact under the assumption that the contact network underlying the spreada of disease is a configuration model random graph with one of the three Poisson-type (PT) degree distributions: Poisson, binomial, or negative binomial.
Perhaps more interestingly,  we have shown that the pairwise closure for an epidemic on a configuration model network is exact if and only if the ratio of mean excess degree to mean degree for susceptible nodes remains constant over time (as the susceptible nodes are depleted).
This condition holds if and only if the degree distribution is PT.
As an interesting corollary of our results, we obtained a single equation representation of the pairwise model that allows parameter estimation from time series data marginalised over the network degree distribution.
This finding is practically useful as it allows, for instance, statistical inference based solely on the disease incidence data as in the  classical, homogeneous SIR models.
Because these statistical methods are based on survival times in susceptible individuals, statistical inference can be based on observation of a random sample of the population.

\section*{Acknowledgements}
EK and GAR were partially supported by the US National Science Foundation (NSF) grant DMS-1853587.
The contents are solely the responsibility of the authors and do not represent the official views or policy of the NSF.
\clearpage

\appendix

\section{Summary of notation}
The following notation is used throughout the paper and in particular in the next section.
\begin{itemize}
\item $\beta:=$ the force of infection per infectious neighbour (i.e., the constant rate at which infectious nodes infect a neighbour).
 \item $\gamma:=$ the recovery rate (i.e., the constant rate at which infected nodes become recovered).
 \item $p_{k}:=$ the probability that a node will have degree $k$.
 \item $\psi(x):=$ the probability generating function for the degree distribution $\left\{p_{k}\right\}$.
 \item $x_S:=$ the proportion of nodes susceptible at time $t$.
 \item $x_I:=$ the proportion of nodes infectious at time $t$
 \item $x_{AB}:=$ the proportion of $AB$-type  edges at time $t$
 \item $\mathcal{A}_{X} :=$ the set of arcs ($i, j$) such that node $i$ is in set $X$.
 \item $M_{X} :=$ the proportion of arcs in set $\mathcal{A}_{X}$.
 \item $\mathcal{A}_{X Y} :=$ the set of arcs ($i, j$) such that $i \in X$ and $j \in Y$.
\item $M_{X Y} :=$ the proportion of arcs in set $\mathcal{A}_{XY}$.
\item $\theta :=$ the probability that infection has not crossed a randomly chosen edge at time $t$, also denoted by $x_{\theta}$ in some of the models.
\item $p_{I} := M_{SI} /M_{S}$,  the probability that an arc $(i, j)$ with a susceptible $i$ has an infectious $j$.
\item $p_{S} := M_{SS} /M_{S}$,  the probability that an arc $(i, j)$ with a susceptible $i$ has a susceptible $j$.
\end{itemize}

\section{Equivalence of   Volz's and DSA models}

\label{app:volzEqDSA}

\begin{lemma}\label{lem:equiv}
The Volz model~\eqref{eq:VOrig}--\eqref{eq:Vol_ic} and the 
DSA model~\eqref{eq:GR}--\eqref{eq:GR_ic} are equivalent.
\end{lemma}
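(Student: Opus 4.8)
The plan is to exhibit an explicit, invertible change of variables between the two state spaces and to show it carries trajectories of one system onto trajectories of the other; equivalence then follows from uniqueness of solutions of the respective initial value problems. The Volz system evolves $(\theta, p_I, p_S, x_I)$ with $x_S = \psi(\theta)$ determined, and the DSA system evolves $(x_\theta, x_{SI}, x_{SS}, x_I)$ with $x_S = \psi(x_\theta)$ determined; the latter identity is not assumed but follows from $\dot{x}_S = -\beta x_{SI}$, $\dot{x}_\theta = -\beta x_{SI}/\psi'(x_\theta)$, and $x_S(0) = x_\theta(0) = 1$, which give $\tfrac{d}{dt}\psi(x_\theta) = \dot{x}_S$. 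I would use the map
\begin{equation*}
    x_\theta = \theta, \qquad x_{SI} = p_I\,\theta\,\psi'(\theta), \qquad x_{SS} = p_S\,\theta\,\psi'(\theta),
\end{equation*}
leaving $x_I$ (and the determined $x_S$) unchanged. The motivation is that $\theta\,\psi'(\theta)$ is the limiting per-node number of half-edges emanating from susceptible nodes (computed in Section~\ref{sec:closure} as $[S\bu]\simeq N\theta\psi'(\theta)$), so these relations simply say that the $SI$- and $SS$-edge proportions equal that susceptible half-edge count times the conditional probabilities $p_I = M_{SI}/M_S$ and $p_S = M_{SS}/M_S$ recorded in Appendix~A. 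On the region $\theta \in (0,1]$ with $\psi'(\theta) > 0$---which contains the whole trajectory, since $\theta$ decreases from $1$ but stays positive and $\psi' > 0$ by hypothesis---this map is a bijection with inverse $\theta = x_\theta$, $p_I = x_{SI}/(x_\theta\psi'(x_\theta))$, $p_S = x_{SS}/(x_\theta\psi'(x_\theta))$.

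First I would confirm that initial data correspond. Under the map, $x_\theta(0) = 1$, $x_S(0) = 1$, and $x_I(0) = \rho$ are immediate, while $x_{SS}(0) = p_S(0)\,\theta(0)\,\psi'(\theta(0)) = \psi'(1)$ and $x_{SI}(0) = p_I(0)\,\theta(0)\,\psi'(\theta(0)) = \rho\,\psi'(1)$. These reproduce the DSA initial conditions~\eqref{eq:GR_ic} under the identification $\mu = \psi'(1)$, i.e. when the parameter $\mu$ equals the mean degree---the natural value of the $SS$-edge proportion when the whole population is susceptible. I would state this identification as part of the equivalence.

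Next I would substitute the change of variables into the DSA equations~\eqref{eq:GR} and check each collapses to an identity once the Volz equations~\eqref{eq:VOrig} are used (the reverse direction uses the inverse map and the same identities). Three equations are essentially immediate: $\dot{x}_\theta = -\beta x_{SI}/\psi'(x_\theta)$ becomes $\dot\theta = -\beta p_I\theta$; $\dot{x}_S = -\beta x_{SI}$ matches the time derivative of $x_S = \psi(\theta)$; and $\dot{x}_I = \beta x_{SI} - \gamma x_I$ coincides verbatim with Volz's $\dot{x}_I$. The content lies in $\dot{x}_{SI}$ and $\dot{x}_{SS}$. Differentiating $x_{SI} = p_I\theta\psi'(\theta)$ by the product and chain rules produces three terms; substituting Volz's expressions for $\dot{p}_I$ and $\dot\theta$ and collecting, the $\psi''$-bearing terms combine into $\beta\,p_I(p_S - p_I)\theta^2\psi''(\theta)$ and the rest into $-(\beta+\gamma)\,p_I\theta\psi'(\theta)$, which is exactly $x_{SI}\big[\beta(x_{SS}-x_{SI})\psi''(x_\theta)/\psi'(x_\theta)^2 - (\beta+\gamma)\big]$ after using $x_{SS}-x_{SI} = (p_S-p_I)\theta\psi'(\theta)$. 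The shorter analogous computation for $x_{SS} = p_S\theta\psi'(\theta)$ uses Volz's $\dot{p}_S$ and reproduces $\dot{x}_{SS} = -2\beta x_{SI}x_{SS}\psi''(x_\theta)/\psi'(x_\theta)^2$.

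The main obstacle is precisely the bookkeeping in the $\dot{x}_{SI}$ identity: Volz's equation for $\dot{p}_I$ itself carries the ratio $\psi''(\theta)/\psi'(\theta)$, and one must track how the prefactor $\theta\psi'(\theta)$ converts this into the DSA ratio $\psi''(x_\theta)/\psi'(x_\theta)^2$ while the $\beta p_I(1-p_I)$ term and the two $\dot\theta$-generated terms cancel correctly. Once the two algebraic identities are verified, both systems satisfy the same autonomous ODE in the shared variables with identical initial data; the right-hand sides are smooth wherever $\theta > 0$ and $\psi'(\theta) > 0$, so by Picard--Lindel\"of their solutions coincide, and the bijection transports one solution to the other for all $t$, proving the equivalence.
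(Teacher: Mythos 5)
Your proposal is correct and follows essentially the same route as the paper: the identification $p_I = x_{SI}/(x_\theta\psi'(x_\theta))$, $p_S = x_{SS}/(x_\theta\psi'(x_\theta))$ is exactly the change of variables the paper uses (via $M_{SI}=x_{SI}/\psi'(1)$ and $M_S=x_\theta\psi'(x_\theta)/\psi'(1)$), and the verification of the $x_{SI}$ and $x_{SS}$ equations is the same algebra. The only differences are cosmetic and in your favour --- you rederive Volz's edge equations $\dot M_{SI}$ and $\dot M_{SS}$ by the product rule rather than citing them, and you make explicit the identification $\mu=\psi'(1)$ and the uniqueness argument, both of which the paper leaves implicit.
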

\begin{proof}
While our starting point is the system of Volz's original set of equations~\eqref{eq:VOrig}, it is useful to recast these over a different state space in order to show equivalence with the DSA system in equations~\eqref{eq:GR}.
Here, we move from the state space in terms of ($\theta, p_{I}, p_{S}$) to a state space in terms of ($\theta=x_{\theta}, x_{SI}, x_{SS}$), where $x_{SI}$ and $x_{SS}$ are simply the limiting counts of all $SI$ and $SS$ -type edges (counted in both directions) scaled by $N$ (the number of nodes in the network) with the limit taken as $N \to \infty$ . 

As alluded to above, $\theta$ and $x_{\theta}$ have exactly the same interpretation, but we denote them differently to differentiate consistently between models over different state spaces.
We now proceed to show how one moves from the original Volz model to the DSA equations. We start by showing that the first equation in \eqref{eq:VOrig} is equivalent to the first equation in \eqref{eq:GR}. Starting from \eqref{eq:VOrig}, and taking into account that $M_{SI}=x_{SI} / \psi'(1)$ and that $M_{S} = x_\theta \psi'(x_\theta) / \psi'(1)$ as shown in~\cite{volz2008sir}, we obtain
\begin{equation}
    \dot{\theta} 
    = -\beta p_{I}\theta 
    = -\beta \frac{M_{SI}}{M_S} \theta
    = -\beta \frac{\frac{x_{SI}}{\psi'(1)}}{\frac{x_{\theta} \psi'(x_{\theta})}{\psi'(1)}} x_{\theta}
    = -\beta \frac{x_{SI}}{\psi'(x_{\theta})}
    = \dot{x}_\theta.
\end{equation}

Showing the equivalence between the other equations requires an extra step. 
That is, the evolution equations for $p_I$ and $p_S$ need to be rewritten explicitly in terms of $SI$ and $SS$ edges.
In~\cite{volz2008sir}, it was shown that the equations for $M_{SI}$ can be written as
\begin{equation}
    \dot{M}_{SI}
    = \beta p_{I} \left(p_{S} - p_{I}\right) \theta^{2} \frac{\psi''(\theta)}{\psi'(1)} - (\beta + \gamma) M_{SI}.
    \label{eq:VOrigMSI}
\end{equation}
Using the expressions for $M_{SI}$ and $M_S$ in terms of the DSA model parameters above and the fact that $M_{SS} = x_ss / \psi'(1)$, we get $p_I = x_{SI} / (x_\theta \psi'(\theta))$ and $p_S = x_{SS} / (x_\theta \psi'(x_\theta))$.
Substituting these into~\eqref{eq:VOrigMSI}, we get
\begin{equation*}
    \begin{aligned}
        \dot{M}_{SI}
        &= \beta p_{I} \left(p_{S} - p_{I}\right) \theta^{2} \frac{\psi''(\theta)}{ \psi'(1)} - (\beta + \gamma) M_{SI} \\
        &= \beta \frac{x_{SI}}{x_\theta \psi'(x_\theta)} \left(\frac{x_{SS}}{x_\theta \psi'(x_\theta)} - \frac{x_{SI}}{x_\theta \psi'(x_\theta)}\right) \frac{x_\theta^2 \psi''(x_\theta)}{\psi'(1)} - (\beta + \gamma) \frac{x_{SI}}{\psi'(1)} \\
        &= \frac{x_{SI}}{\psi'(1)} \left[\beta (x_{SS} - x_{SI}) \frac{\psi''(x_\theta)}{\psi'(x_\theta)^2} - (\beta + \gamma)\right] \\
        &= \frac{\dot{x}_{SI}}{\psi'(1)}
    \end{aligned}
\end{equation*}
Thus, equation~\eqref{eq:VOrig} is equivalent to the equation for $x_{SI}$ in \eqref{eq:GR}.
Following~\cite{volz2008sir}, the evolution equation for $SS$ edges can be rreduced to 
\begin{equation}
    \dot{M}_{SS}
    = -2 \beta p_I p_S \theta^2 \frac{\psi''(\theta)}{\psi'(1)}     
    \label{eq:VOrigMSS}
\end{equation}
Making the same substitutions as before, we get
\begin{equation*}
    \begin{aligned}
        \dot{M}_{SS}
        &= -2 \beta \frac{x_{SI}}{x_\theta \psi'(x_\theta)} \frac{x_{SS}}{x_\theta \psi'(x_\theta)} \frac{x_\theta^2 \psi''(x_\theta)}{\psi''(1)} \\
        &= -2 \beta x_{SI} x_{SS} \frac{\psi''(x_\theta)}{\psi'(x_\theta)^2 \psi'(1)} \\
        &= \frac{\dot{x}_{SS}}{\psi'(1)},
    \end{aligned}
\end{equation*}
which shows that~\eqref{eq:VOrigMSS} and the equation for $x_{SS}$ in~\eqref{eq:GR} are equivalent.
Since the remaining equations in both systems rely on the first three equations that we have just shown to be equivalent, the Volz and DSA models are equivalent under their respective initial conditions.
\end{proof}


\begin{thebibliography}{10}

\bibitem{ball2019stochastic}
Frank Ball, Tom Britton, Ka~Yin Leung, and David Sirl.
\newblock A stochastic sir network epidemic model with preventive dropping of
  edges.
\newblock {\em Journal of Mathematical Biology}, 78(6):1875--1951, 2019.

\bibitem{barbour2013approximating}
Andrew Barbour and Gesine Reinert.
\newblock Approximating the epidemic curve.
\newblock {\em Electronic Journal of Probability}, 18:1--30, 2013.

\bibitem{bastian2020throwing}
Caleb~Deen Bastian and Grzegorz~A Rempala.
\newblock Throwing stones and collecting bones: Looking for poisson-like random
  measures.
\newblock {\em Mathematical Methods in the Applied Sciences}, 43(7):4658--4668,
  2020.

\bibitem{bohman2012sir}
Tom Bohman and Michael Picollelli.
\newblock Sir epidemics on random graphs with a fixed degree sequence.
\newblock {\em Random Structures \& Algorithms}, 41(2):179--214, 2012.

\bibitem{bollobas2001random}
B{\'e}la Bollob{\'a}s.
\newblock {\em Random Graphs}.
\newblock Number~73 in Cambridge Series in Advanced Mathematics. Cambridge
  University Press, 2001.

\bibitem{choi2019modeling}
Boseung Choi, Sydney Busch, Dieudonne Kazadi, Benoit Ilunga, Emile Okitolonda,
  Yi~Dai, Robert Lumpkin, Omar Saucedo, Wasiur~R KhudaBukhsh, Joseph Tien,
  et~al.
\newblock Modeling outbreak data: Analysis of a 2012 ebola virus disease
  epidemic in drc.
\newblock {\em Biomath (Sofia, Bulgaria)}, 8(2), 2019.

\bibitem{decreusefond2012large}
L.~Decreusefond, J.-S. Dhersin, P.~Moyal, and V.~C. Tran.
\newblock Large graph limit for an {SIR} process in random network with
  heterogeneous connectivity.
\newblock {\em The Annals of Applied Probability}, 22(2):541--575, 2012.

\bibitem{di2022dynamic}
Francesco Di~Lauro, Wasiur~R KhudaBukhsh, Istv{\'a}n~Z Kiss, Eben Kenah, Max
  Jensen, and Grzegorz~A Rempa{\l}a.
\newblock Dynamic survival analysis for non-markovian epidemic models.
\newblock {\em Journal of the Royal Society Interface}, 19(191):20220124, 2022.

\bibitem{feld1991why}
S.~L. Feld.
\newblock Why your friends have more friends than you do.
\newblock {\em American Journal of Sociology}, 96(6):1464--1477, 1991.

\bibitem{gross2006epidemic}
T.~Gross, C.~J.~D. D'Lima, and B.~Blasius.
\newblock Epidemic dynamics on an adaptive network.
\newblock {\em Physical Review Letters}, 96(20):208701, 2006.

\bibitem{house2011insights}
Thomas House and Matt~J Keeling.
\newblock Insights from unifying modern approximations to infections on
  networks.
\newblock {\em Journal of The Royal Society Interface}, 8(54):67--73, 2011.

\bibitem{jacobsen2018large}
Karly~A Jacobsen, Mark~G Burch, Joseph~H Tien, and Grzegorz~A Rempa{\l}a.
\newblock The large graph limit of a stochastic epidemic model on a dynamic
  multilayer network.
\newblock {\em Journal of Biological Dynamics}, 12(1):746--788, 2018.

\bibitem{janson2014law}
S.~Janson, M.~Luczak, and P.~Windridge.
\newblock Law of large numbers for the {SIR} epidemic on a random graph with
  given degrees.
\newblock {\em Random Structures \& Algorithms}, 2014.

\bibitem{Keeling1999}
M.~J. Keeling.
\newblock The effects of local spatial structure on epidemiological invasions.
\newblock {\em Proceedings of the Royal Society of London. Series B: Biological
  Sciences}, 266(1421):859--867, 1999.

\bibitem{hosp22}
Wasiur~Rahman Khuda~Bukhsh, Caleb~D Bastian, Matthew Wascher, Colin Klaus,
  Saumya~Yashmohini Sahai, Mark~H Weir, Eben Kenah, Elisabeth Root, Joseph~H.
  Tien, and Grzegorz~A Rempala.
\newblock Projecting covid-19 cases and subsequent hospital burden in ohio.
\newblock {\em medRxiv}, \url{https://doi.org/10.1101/2022.07.27.22278117},
  2022.

\bibitem{khudabukhsh2020survival}
Wasiur~R KhudaBukhsh, Boseung Choi, Eben Kenah, and Grzegorz~A Rempa{\l}a.
\newblock Survival dynamical systems: individual-level survival analysis from
  population-level epidemic models.
\newblock {\em Interface Focus}, 10(1):20190048, 2020.

\bibitem{kiss2009contact}
I.~Z. Kiss, P.~L. Simon, and R.~R. Kao.
\newblock A contact-network-based formulation of a preferential mixing model.
\newblock {\em Bulletin of Mathematical Biology}, 71(4):888--905, 2009.

\bibitem{kiss2017mathematics}
Istv{\'a}n~Z Kiss, Joel~C Miller, and P{\'e}ter~L Simon.
\newblock Mathematics of epidemics on networks.
\newblock {\em Cham: Springer}, 598:31, 2017.

\bibitem{kresge2021analyzing}
Keegan Kresge.
\newblock Analyzing epidemic thresholds on dynamic network structures.
\newblock {\em SIAM Undergraduate Research Online}, 14, 2021.

\bibitem{miller2011note}
J.~C. Miller.
\newblock A note on a paper by {Erik Volz}: {SIR} dynamics in random networks.
\newblock {\em Journal of Mathematical Biology}, 62(3):349--358, 2011.

\bibitem{miller2012edge}
J.~C. Miller, A.~C. Slim, and E.~M. Volz.
\newblock Edge-based compartmental modelling for infectious disease spread.
\newblock {\em Journal of the Royal Society Interface}, 9(70):890--906, 2012.

\bibitem{molloy1995critical}
Michael Molloy and Bruce Reed.
\newblock A critical point for random graphs with a given degree sequence.
\newblock {\em Random Structures \& Algorithms}, 6(2-3):161--180, 1995.

\bibitem{newman2001random}
Mark~EJ Newman, Steven~H Strogatz, and Duncan~J Watts.
\newblock Random graphs with arbitrary degree distributions and their
  applications.
\newblock {\em Physical Review E}, 64(2):026118, 2001.

\bibitem{rand1999}
D.~A. Rand.
\newblock {\em Advanced ecological theory: principles and applications},
  chapter Correlation equations and pair approximations for spatial ecologies,
  pages 100--142.
\newblock Oxford: Blackwell Science, 1999.

\bibitem{rand1999correlation}
DA~Rand.
\newblock Correlation equations and pair approximations for spatial ecologies.
\newblock {\em Advanced Ecological Theory: Principles and Applications},
  100(10.1002):9781444311501, 1999.

\bibitem{risau2009contact}
S.~Risau-Gusm{\'a}n and D.~H. Zanette.
\newblock Contact switching as a control strategy for epidemic outbreaks.
\newblock {\em Journal of Theoretical Biology}, 257(1):52--60, 2009.

\bibitem{Remko}
Remco Van Der~Hofstad.
\newblock {\em Random graphs and complex networks, vol 1}, volume~43.
\newblock Cambridge University Press, 2016.

\bibitem{volz2008sir}
E.~M. Volz.
\newblock {SIR} dynamics in random networks with heterogeneous connectivity.
\newblock {\em Journal of Mathematical Biology}, 56(3):293--310, 2008.

\bibitem{vossler2022analysis}
Harley Vossler, Pierre Akilimali, Yuhan Pan, Wasiur~R KhudaBukhsh, Eben Kenah,
  and Grzegorz~A Rempa{\l}a.
\newblock Analysis of individual-level data from 2018--2020 {Ebola} outbreak in
  {Democratic Republic of the Congo}.
\newblock {\em Scientific Reports}, 12(1):1--10, 2022.

\end{thebibliography}
\bibliographystyle{plain}

\providecommand{\noopsort}[1]{}

\end{document}